\newtheorem{theorem}{Theorem}[section] 
\newtheorem{corollary}[theorem]{Corollary}
\newtheorem{proposition}[theorem]{Proposition}
\newtheorem{definition}[theorem]{Definition}
\begin{document}

\title{A note on the Relationship between Localization and Norm-1 Property}

\author{R. Beneduci}

\address{Department of Mathematics University of Calabria, Italy
\\ and\\
 Istituto Nazionale di Fisica Nucleare gruppo collegato Cosenza}
\ead{rbeneduci@unical.it}
\author{F. E. Schroeck Jr.}

\address{Department of Mathematics, University of Denver, Colorado (USA)
\\and\\
 Florida Atlantic State University (USA)}
\ead{fschroec@du.edu}
\begin{abstract}
\noindent
The paper focuses on the problem of localization in quantum mechanics. It is well known that it is not possible to define a localization observable for the photon by means of projection valued measures. Conversely, that is possible by using positive operator valued measures. On the other hand, projection valued measures imply a kind of localization which is stronger than the one implied by positive operator valued measure. It has been claimed that the norm-1 property would in some sense reduce the gap between the two kind of localizations. We give a necessary condition for the norm-1 property and show that it is not satisfied by several important localization observables. 

\end{abstract}

\maketitle

\section{Introduction}
In the standard formulation of quantum mechanics the observables of a quantum system are represented by self-adjoint operators. The spectral theorem \cite{Berberian} assures the existence of a one-to-one correspondence between self-adjoint operators and Projection Valued Measures (PVMs). In particular, for each self-adjoint operator $A$ there is a map (a PVM) $E:\mathcal{B}(\mathbb{R})\to\mathcal{E(H)}$ from the Borel $\sigma$-algebra of the reals to the space of projection operators on the Hilbert space $\mathcal{H}$ such that $E(\mathbb{R})=\mathbf{1}$, and, for each sequence of disjoint Borel sets $\{\Delta_i\}_{i\in\mathbb{N}}$, $\cup_{i=1}^{\infty}\Delta_i=\Delta$,
$$\sum_{i=1}^{\infty}E(\Delta_i)=E(\Delta)$$

\noindent
where the convergence is in the weak operator topology. 

As it is well known \cite{Jauch,Ali,Holevo1,Holevo2,Busch,Busch1,Rosewarne,Muga}, there are quantum observables (e.g. time observable, position observable for the photon, phase observable) that are not representable by means of self-adjoint operators or Projection Valued Measures.  

A fruitful way to overcome the problem is to generalize the concept of observable by means of Positive Operator Valued Measures (POVMs) \cite{Ali,Busch,Holevo1,Holevo2,Schroeck1} of which the PVMs are a special case. In the following $\mathcal{F(H)}$ denotes the space of positive linear operators less than or equal to the identity $\mathbf{1}$.
\begin{definition}
Let $X$ be a topological space and $\mathcal{B}(X)$ the Borel $\sigma$-algebra on $X$.
\label{POV}
A POVM is a map $F:\mathcal{B}(X)\to\mathcal{
F}(\mathcal{H})$
such that:
    \begin{equation*}
    F\big(\bigcup_{n=1}^{\infty}\Delta_n\big)=\sum_{n=1}^{\infty}F(\Delta_n).
    \end{equation*}
    \noindent 
 where, $\{\Delta_n\}$ is a countable family of disjoint
    sets in $\mathcal{B}(X)$ and the series converges in the weak operator topology. It is said to be normalized if 
\begin{equation*}   
    F(X)={\bf{1}}.
\end{equation*}
\end{definition}    
\begin{definition}
    A POVM is said to be commutative if
    \begin{equation}
    \big[F(\Delta_1),F(\Delta_2)\big]={\bf{0}}\,\,\,\,\forall\,\Delta_1\,,\Delta_2\in\mathcal{B}(X).
    \end{equation}
    \end{definition}

   \begin{definition}
   A POVM is said to be orthogonal if
    \begin{equation}
    F(\Delta_1)F(\Delta_2)={\bf{0}}\,\,\,\hbox{if}\,\,\Delta_1\cap\Delta_2=
    \emptyset.
    \end{equation}
\end{definition}
\begin{definition}
A PVM is an orthogonal, normalized POVM.
\end{definition}
\noindent
In the case of a PVM $E$ we have $\mathbf{0}=E(\Delta)[1-E(\Delta)]=E(\Delta)-E^2(\Delta)$. Therefore, $E(\Delta)$ is a projection operator for every $\Delta\in\mathcal{B}(X)$. We have proved the following proposition.
\begin{proposition} A PVM $E$ on $X$ is a map $E:\mathcal{B}(X)\to\mathcal{E}(\mathcal{H})$ from the Borel $\sigma$-algebra of $\mathcal{B}(X)$ to the space of projection operators on $\mathcal{H}$.
\end{proposition}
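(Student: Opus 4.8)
The plan is to verify directly that every value $E(\Delta)$ of the map satisfies the two defining properties of an orthogonal projection, namely self-adjointness and idempotency. Self-adjointness is immediate: since a PVM is in particular a POVM, its values lie in $\mathcal{F}(\mathcal{H})$, the positive operators bounded above by $\mathbf{1}$, and every positive operator on a complex Hilbert space is self-adjoint. Thus the entire content of the statement reduces to establishing $E(\Delta)^2 = E(\Delta)$ for each $\Delta \in \mathcal{B}(X)$, which is exactly the identity $\mathbf{0} = E(\Delta)[\mathbf{1}-E(\Delta)]$ anticipated in the text preceding the proposition.

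The central computation I would carry out exploits the complement $\Delta^c = X \setminus \Delta$. Because $\Delta$ and $\Delta^c$ are disjoint, the orthogonality axiom gives
\begin{equation*}
E(\Delta)\,E(\Delta^c) = \mathbf{0}.
\end{equation*}
To rewrite $E(\Delta^c)$ I would invoke additivity together with normalization. Applying the countable additivity of Definition~\ref{POV} to the decomposition $X = \Delta \cup \Delta^c$ (padding the sequence with empty sets) yields $E(X) = E(\Delta) + E(\Delta^c)$, and since $E(X) = \mathbf{1}$ by normalization we obtain $E(\Delta^c) = \mathbf{1} - E(\Delta)$. Substituting this into the orthogonality relation produces
\begin{equation*}
E(\Delta)\big[\mathbf{1} - E(\Delta)\big] = E(\Delta) - E(\Delta)^2 = \mathbf{0},
\end{equation*}
so $E(\Delta)$ is idempotent. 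Combined with self-adjointness, this shows $E(\Delta) \in \mathcal{E}(\mathcal{H})$, and since $\Delta$ was arbitrary the map takes its values in the projections, as claimed.

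The statement is elementary, so there is no deep obstacle; the only points demanding care are the two auxiliary facts used implicitly above. First, I must justify passing from the countable additivity in the definition to finite additivity on the pair $\{\Delta, \Delta^c\}$: this requires $E(\emptyset) = \mathbf{0}$, which I would derive by noting that $E(\emptyset) = \sum_{n=1}^{\infty} E(\emptyset)$ forces the positive operator $E(\emptyset)$ to vanish. Second, I would record explicitly that membership in $\mathcal{F}(\mathcal{H})$ entails self-adjointness, since the proposition's conclusion concerns orthogonal projections rather than merely idempotent operators. Once these are in place the argument closes immediately.
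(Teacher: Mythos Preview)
Your argument is correct and is exactly the approach taken in the paper: the text preceding the proposition records the single computation $\mathbf{0}=E(\Delta)[\mathbf{1}-E(\Delta)]=E(\Delta)-E(\Delta)^2$, which is precisely your orthogonality-plus-normalization step applied to $\Delta$ and $\Delta^c$. You have simply made explicit the auxiliary facts (self-adjointness from positivity, $E(\emptyset)=\mathbf{0}$, finite additivity) that the paper leaves implicit.
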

\begin{definition}
The spectrum $\sigma(F)$ of $F$ is the set of points $x\in X$ such that $F(\Delta)\neq\mathbf{0}$, for any open set $\Delta$ containing $x$.
\end{definition}
\noindent
From a general theoretical viewpoint, the introduction of POVMs can be justified by analyzing the statistical description of a measurement \cite{Holevo1} but, as we pointed out above, there are important physical motivations that go in the same direction.  (See also \cite{Busch,Schroeck1}.) 

The present note focuses on the problem of localization in quantum mechanics. We start by giving the definition of covariance which plays a key role in the definition of localization.
 \begin{definition}
Let $G$ be a locally compact topological group. Let $x\mapsto gx$, $g\in G$, be the action of $G$ on a topological space $X$. Let $U$ be a strongly continuous unitary representation of $G$ in Hilbert space $\mathcal{H}$. A POVM $F:\mathcal{B}(X)\to\mathcal{F(H)}$ is covariant with respect to $G$ if, for any Borel set $\Delta\in \mathcal{B}(X)$,
$$U_g^{\dag}F(\Delta)U_g=F(g\Delta)$$
\noindent
where, $g\Delta=\{x'\in X\,\,\vert\,\, x'=gx, x\in\Delta\}$.
\end{definition}
\noindent
Now, we can proceed to give the definition of localization. Localization requires covariance with respect to a group $G$ describing the kinematics of the system.We start by considering the general case of localization in an abstract topological space $X$. Then, we specialize $X$ to $\mathbb{R}^3$ (for the case of space localization) and  to the phase space $\Gamma$. (For the case of phase space localization.) We have two possible definitions of localization; i.e., sharp localization and unsharp localization. Sharp localization is defined by requiring covariance of a PVM under the group $G$.
\begin{definition}
Let $G$ be a group describing the kinematics of a quantum system. Let $x\mapsto gx$, $g\in G$, be the action of $G$ on a topological space $X$. Let $U$ be a strongly continuous unitary representation of $G$ in Hilbert space $\mathcal{H}$. A PVM $E:\mathcal{B}(X)\to\mathcal{F(H)}$ represents a sharp localization observable in $X$ (with respect to $(G,U)$) if 
$$U_gE(\Delta)U^\dag_g=E(g\Delta).$$
\end{definition}  
\noindent
The previous definition can be weakened by replacing the PVM by a POVM. That corresponds to an unsharp localization.  
\begin{definition}
Let $G$ be a group describing the kinematics of a quantum system. Let $x\mapsto gx$, $g\in G$, be the action of $G$ on a topological space $X$. Let $U$ be a strongly continuous unitary representation of $G$ in Hilbert space $\mathcal{H}$. A POVM $F:\mathcal{B}(X)\to\mathcal{F(H)}$ represents an unsharp localization observable in $X$ (with respect to $(U,G)$) if
 $$U_gE(\Delta)U^\dag_g=E(g\Delta).$$
\end{definition} 
\noindent
Now, we specialize to the case of relativistic localization in $\mathbb{R}^3$. In the relativistic case, the relevant group is the Poincare group and sharp localization is defined as follows \cite{Castrigiano,Holevo2}: let $W$ be a continuous unitary representation of the universal covering of the Poincare group. Let U
be the restriction of $W$ to the universal covering group $ISU(2)=\{(\mathbf{a},B),\,\vert\,\mathbf{a}\in\mathbb{R}^3,\,B\in SU(2)\}$ of the Euclidean group and $\Lambda:SU(2)\to SO(3)$ the universal covering homomorphism. 
 A quantum system is said to be Wightman localizable if there is a PVM $E:\mathcal{B}(\mathbb{R}^3)\to\mathcal{E(H)}$ such that
$$U(\mathbf{a},B)E(\Delta)U^\dag(\mathbf{a},B)=E(\mathbf{a}+\Lambda(B)\Delta) \quad (\rm{sharp\, localization}).$$
\noindent
The covariance assures that the results of a localization measurement do not depend on the choice of the origin and the orientation of the reference frame.
 \noindent
As we have just remarked, in the case of the photon, sharp localization is impossible \cite{Ali,Castrigiano,Jauch,Hegerfeldt}. Conversely, localization of the photon can be described by means of POVMs $F:\mathcal{B}(\mathbb{R}^3)\to\mathcal{F(H)}$ such that  
$$U(\mathbf{a},B)F(\Delta)U^\dag(\mathbf{a},B)=F(\mathbf{a}+\Lambda(B)\Delta) \quad (\rm{unsharp\, localization}).$$ 


By specializing $X$ to a phase space $\Gamma$ we get the concept of  unsharp localization in phase space which requires the POVM, $F$, to be defined on $\Gamma=X$ and to be covariant with respect to a group $G$ which characterizes the symmetries of the system \cite{Brooke}. Examples of symmetry groups are the Galilei group in the non relativistic case and the Poincare group in the relativistic case. (See section 3 for further details.) For the case of the photons in their phase space the relevant group is the Poincare group. \cite{Brooke}

Clearly the introduction of the POVMs for the description of localization observables implies a change in the standard concept of localization in quantum mechanics. If a localization observable is described by a covariant PVM $E$ (sharp localization) then, for any Borel set $\Delta$ there exists a state $\psi$ such that $\langle\psi, E(\Delta)\psi\rangle=1$; i.e., the probability that a measure of the position of the system in the state $\vert\psi\rangle$ gives a result in $\Delta$  is one. Conversely,  if a localization observable is described by a POVM $F$, there are Borel sets $\Delta$ such that  $0<\langle\psi, F(\Delta)\psi\rangle<1$ for any vector $\psi$ (unsharp localization). There are even covariant POVMs such that the condition $\mathbf{0}\leq F(\Delta)<\mathbf{1}$ holds for any $\Delta\in\mathcal{B}(X)$ with $\mu(X-\Delta)>0$, where $\mu$ is the measure on $X$. \cite{Schroeck} 

Sharp localizability being untenable in relativistic theory, \cite{Wightman} we need to switch to unsharp localizability. It is worth remarking that the relationships between localization and relativistic causality is quite problematic \cite{Hegerfeldt,Busch1}.


What is said above raises the following question for the unsharp case: Is it true that, for any Borel set $\Delta$, there is a family of unit vectors $\psi_n$ such that $\lim_{n\to\infty}\langle\psi_n, F(\Delta)\psi_n\rangle=1$. Whenever such a property holds, we say that the POVM has the norm-1 property. \cite{Car,He} Clearly, the norm-1 property implies a kind of unsharp localization which is closer to the sharp one. Indeed, for each $\epsilon$, it is possible to find a state $\psi$ such that the probability $\langle\psi, F(\Delta)\psi\rangle$ that a measure gives a result in $\Delta$ is greater than $1-\epsilon$.

In the present note we analyze some general aspects of the concept of the norm-1 property which are related to the concepts of absolute continuity and uniform continuity of a POVM (section 2) and derive some consequences for the concept of localization in phase space and configuration space (sections 3 and 4). In particular, we give a necessary condition for the norm-1 property to hold and prove that it is not satisfied for a class of localization observables in  phase space as well as for the corresponding marginal observables. 
\section{A necessary condition for the norm-1 property}

In the present section, we give a necessary condition for the norm-1 property of a POVM defined on a topological manifold of dimension $n$.
First, we recall the definition of the norm-1 property.
\begin{definition} A POVM $F:\mathcal{B}(X)\to\mathcal{F(H)}$ has the norm-1-property if $\|F(\Delta)\|=1$, for each $\Delta\in\mathcal{B}(X)$ such that $F(\Delta)\neq\mathbf{0}$. 
\end{definition}
\noindent
The following proposition points out the physical meaning of the norm-1 property. 
\begin{proposition}
$F$ has the norm-1 property if and only if, for each $\Delta\in\mathcal{B}(X)$ such that $F(\Delta)\neq\mathbf{0}$, there is a sequence of unit vectors $\psi_n$ such that $\lim_{n\to\infty}\langle\psi_n, F(\Delta)\psi_n\rangle=1$.
\end{proposition}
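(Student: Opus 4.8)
The plan is to reduce the statement to the standard variational characterization of the operator norm of a positive bounded operator. Since $F(\Delta)\in\mathcal{F(H)}$, the operator $F(\Delta)$ is self-adjoint and satisfies $\mathbf{0}\leq F(\Delta)\leq\mathbf{1}$; in particular its spectrum lies in $[0,1]$ and $\|F(\Delta)\|\leq 1$. The crucial tool is the identity $\|A\|=\sup_{\|\psi\|=1}\langle\psi,A\psi\rangle$, valid for any positive bounded self-adjoint operator $A$, the supremum being taken over unit vectors. I would recall (or cite) this fact first, noting that positivity guarantees $\langle\psi,A\psi\rangle\geq 0$, so that the supremum of the expectation values coincides with the supremum of their absolute values, which is precisely the norm.

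For the forward implication, assume $\|F(\Delta)\|=1$ for some $\Delta$ with $F(\Delta)\neq\mathbf{0}$. By the variational characterization, $\sup_{\|\psi\|=1}\langle\psi,F(\Delta)\psi\rangle=1$, and by the very definition of the supremum there exists a sequence of unit vectors $\psi_n$ with $\langle\psi_n,F(\Delta)\psi_n\rangle\to 1$. This is the desired sequence.

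For the reverse implication, suppose there is a sequence of unit vectors $\psi_n$ with $\langle\psi_n,F(\Delta)\psi_n\rangle\to 1$. Each term satisfies $\langle\psi_n,F(\Delta)\psi_n\rangle\leq\|F(\Delta)\|$, so passing to the limit yields $\|F(\Delta)\|\geq 1$. Combined with the a priori upper bound $\|F(\Delta)\|\leq 1$ coming from $F(\Delta)\leq\mathbf{1}$, we conclude $\|F(\Delta)\|=1$. Since this holds for every $\Delta$ with $F(\Delta)\neq\mathbf{0}$, the POVM $F$ has the norm-1 property.

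There is no substantial obstacle here: the entire argument rests on the variational formula for the norm of a positive operator together with the bound $F(\Delta)\leq\mathbf{1}$. The only points requiring mild care are to justify that $F(\Delta)$ is self-adjoint, so that the variational characterization applies, which is immediate from the requirement that a POVM take values in the positive operators $\mathcal{F(H)}$, and to observe that positivity permits dropping the absolute value in $\sup_{\|\psi\|=1}|\langle\psi,F(\Delta)\psi\rangle|$.
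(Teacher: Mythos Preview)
Your proposal is correct and follows essentially the same approach as the paper: both directions rest on the variational formula $\|F(\Delta)\|=\sup_{\|\psi\|=1}\langle\psi,F(\Delta)\psi\rangle$ together with the bound $F(\Delta)\leq\mathbf{1}$. Your write-up is slightly more explicit in justifying why the variational characterization applies, but the argument is the same.
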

\begin{proof}
Suppose $F$ has the norm-1 property and $\|F(\Delta)\|\neq\mathbf{0}$. Then, 
$$1=\|F(\Delta)\|=\sup_{\|\psi\|=1}\{\langle\psi, F(\Delta)\psi\rangle\}.$$
\noindent
Hence, there is a sequence $\vert\psi_n\rangle$ such that 
\begin{equation}\label{lim}
\lim_{n\to\infty}\langle\psi_n, F(\Delta)\psi_n\rangle=1.
\end{equation}
\noindent
Conversely, since $F(\Delta)\leq\mathbf{1}$, equation (\ref{lim}) implies $\|F(\Delta)\|=1$.
\end{proof}
\noindent
In quantum mechanics, the state of a system is a unit vector, $\psi$, in a Hilbert space $\mathcal{H}$ while an observable is a PVM or a POVM, $F$.  From an operational viewpoint, the states represent the \textit{preparation instruments} while the observables represent the \textit{measurement instruments}.\cite{Ludwig,Muynck,Schroeck1} The connection between the two mathematical terms (states and observables) and the experimental data is given by the expression $p^F_\psi(\Delta):=\langle\psi, F(\Delta)\psi\rangle$ which is interpreted as the probability that the pointer of the measurement instrument (represented by $F$) gives a result in $\Delta$ if the state of the system is $\psi$. It is then clear why proposition \ref{lim} explains the physical meaning of the norm-1 property; i.e., if $F$ has the norm-1 property then, for any $\epsilon>0$, there is a state $\psi$ such that $p^F_\psi(\Delta)>1-\epsilon $. That implies a kind of localization very close to the one we can realize with the PVMs.

Now, we need to introduce the concept of absolute continuity which will be helpful in the study of localization in phase space. 
\begin{definition}
Let $F:\mathcal{B}(X)\to\mathcal{F(H)}$ be a POVM and $\nu:X\to\mathbb{R}$ a regular measure. $F$ is absolutely continuous with respect to $\nu$ if there exists a number $c$ such that 
\begin{equation*}
\|F(\Delta)\|\leq c\,\nu(\Delta),\quad \forall\Delta\in\mathcal{B}(X).
\end{equation*}
\end{definition}
\begin{definition}
Let $F$ be a POVM.  Let $\Delta=\cup_{i=1}^{\infty}\Delta_i$,  $\Delta_i\cap\Delta_j=\emptyset$. $F$ is said uniformly continuous if $\lim_{n\to\infty}\sum_{i=1}^{n}F(\Delta_i)=F(\Delta)$ in the uniform operator topology.
\end{definition}
\begin{proposition}\label{up}
A POVM $F$ is uniformly continuous if and only if, for any sequence $\{\Delta_i\}_{i\in\mathbb{N}}$such that $\Delta_i\uparrow\Delta$, we have  
\begin{equation*}
\lim_{n\to\infty}\| F(\Delta)-F(\Delta_i)\|=0.
\end{equation*}   
\end{proposition}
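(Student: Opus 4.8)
The plan is to prove both implications by exploiting finite additivity of $F$ together with the elementary correspondence between countable disjoint unions and monotone increasing sequences of sets. First I would record the basic facts that $F(\emptyset)=\mathbf{0}$ (take $\Delta_1=\Delta$ and $\Delta_n=\emptyset$ for $n\geq2$ in countable additivity; the partial sums $(n-1)F(\emptyset)$ converge weakly only if $F(\emptyset)=\mathbf{0}$) and that $F$ is therefore finitely additive and monotone, so that for $\Delta'\subseteq\Delta$ one has $F(\Delta)-F(\Delta')=F(\Delta\setminus\Delta')\geq\mathbf{0}$. These are the only structural inputs the argument needs.

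For the forward direction (uniform continuity implies the stated condition), suppose $\Delta_i\uparrow\Delta$. I would set $\Gamma_1:=\Delta_1$ and $\Gamma_i:=\Delta_i\setminus\Delta_{i-1}$ for $i\geq2$. These are pairwise disjoint with $\bigcup_{i=1}^{\infty}\Gamma_i=\Delta$, and by finite additivity the partial sums telescope: $\sum_{i=1}^{n}F(\Gamma_i)=F(\Delta_n)$. Uniform continuity then gives $\|F(\Delta)-F(\Delta_n)\|=\|F(\Delta)-\sum_{i=1}^{n}F(\Gamma_i)\|\to0$, which is exactly the desired conclusion.

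For the converse, suppose the condition holds and let $\Delta=\bigcup_{i=1}^{\infty}\Delta_i$ with the $\Delta_i$ pairwise disjoint. I would pass to the partial unions $S_n:=\bigcup_{i=1}^{n}\Delta_i$, which form an increasing sequence $S_n\uparrow\Delta$. Finite additivity gives $F(S_n)=\sum_{i=1}^{n}F(\Delta_i)$, so the hypothesis yields $\|F(\Delta)-\sum_{i=1}^{n}F(\Delta_i)\|=\|F(\Delta)-F(S_n)\|\to0$; that is, the series converges to $F(\Delta)$ in the uniform operator topology, which is precisely uniform continuity.

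I do not expect a genuine obstacle here: the content is entirely the set-theoretic bookkeeping converting a disjoint family into its increasing partial unions and vice versa. The only points requiring care are justifying finite additivity from the weak-topology countable additivity in the definition and checking the two telescoping identities $\sum_{i=1}^{n}F(\Gamma_i)=F(\Delta_n)$ and $\sum_{i=1}^{n}F(\Delta_i)=F(S_n)$; both are routine once $F(\emptyset)=\mathbf{0}$ is in hand.
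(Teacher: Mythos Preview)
Your proof is correct and follows essentially the same route as the paper: in each direction you pass between a disjoint family and the increasing sequence of its partial unions (or conversely decompose an increasing sequence via successive differences), exactly as the paper does with its sets $\overline{\Delta}_n$ and $\overline{\Delta}_i$. The only differences are cosmetic---you treat the two implications in the opposite order and make explicit the routine facts $F(\emptyset)=\mathbf{0}$ and finite additivity that the paper uses tacitly.
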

\begin{proof}
Suppose that $\lim_{n\to\infty}\| F(\Delta)-F(\Delta_i)\|=0$ whenever $\Delta_i\uparrow\Delta$. Let $\{\Delta_i\}_{i\in\mathbb{N}}$ be a sequence of disjoint sets such that $\cup_{i=1}^{\infty}\Delta_i=\Delta$. Then, we can define the family of sets $\overline{\Delta}_n=\cup_{i=1}^n\Delta_i$. We have $\overline{\Delta}_i\uparrow\Delta$. Therefore,
$$\lim_{n\to\infty}\|F(\Delta)-\sum_{i=1}^nF(\Delta_i)\|=\lim_{n\to\infty}\|F(\Delta)-F(\overline{\Delta}_n)\|=0$$
\noindent
Conversely, suppose that $F$ is uniformly continuous. Let $\Delta_i$ be such that $\Delta_i\uparrow\Delta$. We can define the family of sets $\overline{\Delta}_i=\Delta_i-\Delta_{i-1}$ with $\Delta_0=\emptyset$. We have, $\overline{\Delta}_i\cap\overline{\Delta}_j=\emptyset$, $i\neq j$. Moreover, $\Delta_n=\cup_{i=1}^n\overline{\Delta}_i$ and $\cup_{i=1}^{\infty}\overline{\Delta}_i=\Delta$. Therefore,
$$\lim_{n\to\infty}\| F(\Delta)-F(\Delta_n)\|=\lim_{n\to\infty}\|F(\Delta)-\sum_{i=1}^n F(\overline{\Delta}_i)\|=0.$$
\end{proof}
\begin{proposition}\label{down}
$F$ is uniformly continuous if and only if, 
$\lim_{n\to\infty}\| F(\Delta_i)\|=0$ whenever $\Delta_i\downarrow\emptyset$.
\end{proposition}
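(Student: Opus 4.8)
The plan is to derive this directly from Proposition \ref{up}, which already characterizes uniform continuity in terms of increasing sequences $\Delta_i\uparrow\Delta$. The bridge between the ``upward'' formulation there and the ``downward'' formulation here is set-theoretic complementation, which converts a decreasing sequence into an increasing one and back, combined with the finite additivity of $F$. So before anything else I would record finite additivity: taking $\Delta_1=A$, $\Delta_2=B$ disjoint and all remaining sets empty in Definition \ref{POV}, and noting that $F(\emptyset)=\mathbf{0}$ (which itself follows from countable additivity, since $\langle\psi,F(\emptyset)\psi\rangle=\sum_{n}\langle\psi,F(\emptyset)\psi\rangle$ forces $\langle\psi,F(\emptyset)\psi\rangle=0$ for every $\psi$), one obtains $F(A\cup B)=F(A)+F(B)$.

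For the forward direction I would assume $F$ uniformly continuous and take $\Delta_i\downarrow\emptyset$. Setting $A_i:=\Delta_1\setminus\Delta_i$, one checks $A_i\uparrow\Delta_1$, since $\cup_i A_i=\Delta_1\setminus\cap_i\Delta_i=\Delta_1$. As $A_i$ and $\Delta_i$ are disjoint with union $\Delta_1$, finite additivity gives $F(\Delta_i)=F(\Delta_1)-F(A_i)$, and Proposition \ref{up} applied to $A_i\uparrow\Delta_1$ then yields $\|F(\Delta_i)\|=\|F(\Delta_1)-F(A_i)\|\to 0$.

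The converse would proceed symmetrically: assuming $\|F(\Delta_i)\|\to 0$ for every $\Delta_i\downarrow\emptyset$, I would take $\Delta_i\uparrow\Delta$ and set $B_i:=\Delta\setminus\Delta_i$, so that $B_i\downarrow\emptyset$. Finite additivity gives $F(\Delta)-F(\Delta_i)=F(B_i)$, whence $\|F(\Delta)-F(\Delta_i)\|=\|F(B_i)\|\to 0$, and Proposition \ref{up} delivers uniform continuity. The argument is essentially routine once Proposition \ref{up} is available; the only point demanding care is the justification of finite additivity (and of $F(\emptyset)=\mathbf{0}$) from the weak-operator countable additivity of the POVM, since every subsequent manipulation rests on splitting $F(\Delta_1)=F(A_i)+F(\Delta_i)$ exactly.
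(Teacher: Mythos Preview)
Your proof is correct and follows essentially the same route as the paper's: complementation to pass between increasing and decreasing sequences, plus finite additivity of $F$. The only cosmetic difference is that for the implication ``downward property $\Rightarrow$ uniform continuity'' the paper verifies the original definition directly (taking a disjoint sequence $\{\Delta_i\}$ with $\cup_i\Delta_i=\Delta$ and noting $\Delta\setminus\cup_{i=1}^n\Delta_i\downarrow\emptyset$), whereas you pass through Proposition~\ref{up}; these amount to the same computation. Your explicit justification of $F(\emptyset)=\mathbf{0}$ and finite additivity is a welcome addition the paper leaves implicit.
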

\begin{proof}
Suppose that $\lim_{n\to\infty}\| F(\Delta_i)\|=0$ whenever $\Delta_i\downarrow\emptyset$. Let $\{\Delta_i\}_{i\in\mathbb{N}}$ be a disjoint sequence of sets such that $\cup_{i=1}^{\infty}\Delta_i=\Delta$. We have $\Delta-\cup_{i=1}^{n}\Delta_i\downarrow\emptyset$. Therefore,
\begin{equation*}
\lim_{n\to\infty}\|F(\Delta)-\sum_{i=1}^{n}F(\Delta_i)\|=\lim_{n\to\infty}\|F(\Delta-\cup_{i=1}^n\Delta_i)\|=0.
\end{equation*}
\noindent
Conversely, suppose $F$ uniformly continuous and $\Delta_i\downarrow\emptyset$. We can define the family of sets $\overline{\Delta}_i=\Delta_1-\Delta_i$. Clearly, $\overline{\Delta}_i\uparrow\Delta_1$.
Therefore, by proposition \ref{up},
$$\lim_{i\to\infty}\|F(\Delta_i)\|=\lim_{i\to\infty}\|F(\Delta_i)-F(\Delta_1)+F(\Delta_1)\|=\lim_{i\to\infty}\|F(\overline{\Delta}_i)-F(\Delta_1)\|=0.$$

\end{proof}

\noindent
Now, we can prove a necessary condition for the norm-1-property.

\begin{theorem}\label{uni}
Let $X$ be a Hausdorff topological space with a countable basis and locally homeomorphic to $\mathbb{R}^n$ for a particular $n$. (In other words, $X$ is an n-manifold. \cite{Munkres})  Suppose $F:X\to\mathcal{F(H)}$ is uniformly continuous and let $\sigma(F)$ be the spectrum of  $F$. Then, $F$ has the norm-1-property only if $\|F(\{x\})\|\neq 0$ for each $x\in\sigma(F)$.
\end{theorem}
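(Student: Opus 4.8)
The plan is to argue by contraposition: I would assume that there is some $x\in\sigma(F)$ with $\|F(\{x\})\|=0$ and deduce that $F$ fails the norm-1 property. Since $F(\{x\})$ is a positive operator, $\|F(\{x\})\|=0$ is equivalent to $F(\{x\})=\mathbf{0}$, and this is the identity I will exploit. Observe also that $\{x\}$ is closed, hence Borel, because $X$ is Hausdorff, so that $F(\{x\})$ is well defined.

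First I would use the manifold structure to manufacture a decreasing sequence of open neighbourhoods of $x$ that shrinks down to the singleton. Concretely, pick a chart $\phi:U\to V\subseteq\mathbb{R}^n$ with $U$ an open neighbourhood of $x$ and set $y=\phi(x)$; choosing radii $r_i\downarrow 0$ with $B(y,r_i)\subseteq V$, the sets $\Delta_i:=\phi^{-1}(B(y,r_i))$ are open neighbourhoods of $x$ with $\Delta_{i+1}\subseteq\Delta_i$ and $\bigcap_i\Delta_i=\{x\}$, i.e.\ $\Delta_i\downarrow\{x\}$. This is exactly the step where the hypotheses ``locally homeomorphic to $\mathbb{R}^n$'' and ``countable basis'' are needed: in a general topological space a point need not admit such a countable decreasing base of open sets, and the $\Delta_i$ must be open so that membership of $x$ in $\sigma(F)$ forces $F(\Delta_i)\neq\mathbf{0}$.

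The heart of the argument is then to estimate the norms $\|F(\Delta_i)\|$. Writing $\Delta_i$ as the disjoint Borel union $\{x\}\cup(\Delta_i\setminus\{x\})$ and using finite additivity of $F$ together with $F(\{x\})=\mathbf{0}$, I obtain $F(\Delta_i)=F(\Delta_i\setminus\{x\})$. Since $\Delta_i\downarrow\{x\}$ one has $\Delta_i\setminus\{x\}\downarrow\emptyset$, so the assumption that $F$ is uniformly continuous allows me to invoke Proposition \ref{down} and conclude $\lim_{i\to\infty}\|F(\Delta_i\setminus\{x\})\|=0$, whence $\lim_{i\to\infty}\|F(\Delta_i)\|=0$. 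This is the place where uniform continuity does the essential work: it upgrades the weak-operator $\sigma$-additivity of $F$ to convergence in operator norm, which is precisely what the estimate requires.

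Finally I would derive the contradiction. On one hand, $x\in\sigma(F)$ and each $\Delta_i$ is open and contains $x$, so $F(\Delta_i)\neq\mathbf{0}$ for every $i$; on the other hand $\|F(\Delta_i)\|\to 0$ forces $0<\|F(\Delta_i)\|<1$ for all sufficiently large $i$. Thus there are Borel sets on which $F$ is nonzero yet has norm strictly below $1$, contradicting the norm-1 property. The contrapositive is exactly the claim, so this completes the proof. The only genuine obstacle is organising the topology in the first step so that the shrinking neighbourhoods are truly open and intersect in the single point $x$; once that is arranged, the analytic content is carried entirely by Proposition \ref{down}.
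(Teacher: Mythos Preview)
Your proposal is correct and follows essentially the same route as the paper: both arguments choose a decreasing sequence of open neighbourhoods $\Delta_i\downarrow\{x\}$ (guaranteed by the manifold structure), use additivity to write $F(\Delta_i)=F(\{x\})+F(\Delta_i\setminus\{x\})$, and then apply Proposition~\ref{down} to $\Delta_i\setminus\{x\}\downarrow\emptyset$ to force $\|F(\Delta_i)\|\to 0$, which is incompatible with $\|F(\Delta_i)\|=1$. The only cosmetic difference is that the paper packages the conclusion as the single inequality $1=\lim_i\|F(\Delta_i)\|\leq\lim_i\|F(\Delta_i\setminus\{x\})\|+\|F(\{x\})\|=0$, whereas you spell out the contrapositive explicitly.
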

\begin{proof}
We proceed by contradiction. Suppose that $F$ has the norm-1 property and that there exists $x\in\sigma(F)$, such that $\|F(\{x\})\|=0$. Let $\Delta_i\subset \mathcal{B}(\Omega)$ be a sequence of open sets such that $\Delta_i\downarrow\{x\}$. The existence of the sequence $\{\Delta_i\}_{i\in\mathbb{N}}$ is assured by the fact that $\Omega$ is locally homeomorphic to $\mathbb{R}^n$ for a particular $n\in\mathbb{N}$. By the norm-1 property\footnote{We recall (see \cite{Berberian}, page 32) that $x$ is in the spectrum of $F$ if and only if $x\in\Delta$ with $\Delta$ open implies $F(\Delta)\neq\mathbf{0}$.}, the uniform continuity of $F$, and proposition \ref{down},
\begin{eqnarray*}
1=\lim_{i\to\infty}\|F(\Delta_i)\|&=\lim_{i\to\infty}\|F(\Delta_i)-F(\{x\})+F(\{x\})\|\\
&\leq\lim_{i\to\infty}\|F(\Delta_i-\{x\})\|+\|F(\{x\})\|=0.
\end{eqnarray*}
\end{proof}
\noindent
Notice that absolute continuity with respect to a finite regular measure implies uniform continuity. 
\begin{theorem}\label{abs}
Let $F$ be absolutely continuous with respect to a finite measure $\nu$. Then, $F$ is uniformly continuous.
\end{theorem}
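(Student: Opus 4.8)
The plan is to reduce the claim to the ``downward'' characterization of uniform continuity furnished by Proposition \ref{down}, namely that $F$ is uniformly continuous if and only if $\|F(\Delta_i)\|\to 0$ whenever $\Delta_i\downarrow\emptyset$. Thus it suffices to verify this decreasing-to-empty condition under the hypothesis of absolute continuity, and I would not attempt to establish uniform continuity from the original series definition directly.

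First I would fix an arbitrary decreasing sequence of Borel sets $\Delta_i\downarrow\emptyset$, i.e.\ $\Delta_1\supseteq\Delta_2\supseteq\cdots$ with $\bigcap_{i}\Delta_i=\emptyset$. The key step is to pass from the sets to the scalars $\nu(\Delta_i)$. Since $\nu$ is a \emph{finite} measure, it is continuous from above, so $\nu(\Delta_i)\to\nu\big(\bigcap_i\Delta_i\big)=\nu(\emptyset)=0$. This is precisely the place where finiteness is indispensable: continuity from above can fail for infinite measures, so the hypothesis that $\nu$ is finite cannot be dropped here.

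I would then invoke absolute continuity directly: by definition there is a constant $c$ with $\|F(\Delta_i)\|\leq c\,\nu(\Delta_i)$ for every $i$. Letting $i\to\infty$ and using $\nu(\Delta_i)\to 0$ yields $\|F(\Delta_i)\|\to 0$. Since the sequence $\Delta_i\downarrow\emptyset$ was arbitrary, Proposition \ref{down} now gives that $F$ is uniformly continuous, completing the argument.

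I expect the only genuine subtlety to be the appeal to continuity from above of $\nu$; everything else is a one-line estimate. It is worth recording explicitly that the finiteness hypothesis on $\nu$ enters exactly to guarantee $\nu(\Delta_i)\to 0$ along decreasing sequences, which is the hinge of the proof, whereas absolute continuity serves only to transport this decay from the scalars to the operator norms.
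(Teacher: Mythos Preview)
Your proof is correct and essentially parallel to the paper's. The paper instead invokes Proposition~\ref{up}: for $\Delta_i\uparrow\Delta$ it writes $\|F(\Delta)-F(\Delta_i)\|=\|F(\Delta-\Delta_i)\|\le c\,\nu(\Delta-\Delta_i)\to 0$, which is the same continuity-from-above step (applied to the decreasing sets $\Delta-\Delta_i\downarrow\emptyset$) that you isolate directly via Proposition~\ref{down}.
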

\begin{proof}
Suppose $\Delta_i\uparrow\Delta$. We have 
\begin{eqnarray*}
\lim_{n\to\infty}\| F(\Delta)-F(\Delta_i)\|&=\lim_{n\to\infty}\| F(\Delta-\Delta_i)\|\\
&\leq c\lim_{n\to\infty}\nu(\Delta-\Delta_i)=0.
\end{eqnarray*} 
\noindent
Proposition \ref{up} ends the proof.
\end{proof}
\noindent
In the case that $F$ is absolutely continuous with respect to an infinite measure, we have the following weak version of theorem \ref{abs}.
\begin{theorem}\label{abc}
Suppose $F$ absolutely continuous with respect to a regular measure $\nu$. Suppose, $\Delta$ such that $\nu(\Delta)<\infty$. Then, $F$ is uniformly continuous at $\Delta$; i.e., for any family of sets $\{\Delta_i\}_{i\in\mathbb{N}}$, $\Delta_i\downarrow\Delta$,
$$\lim_{n\to\infty}\|F(\Delta_i)-F(\Delta)\|=0.$$
\end{theorem}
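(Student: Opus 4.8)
The plan is to mirror the proof of Theorem \ref{abs}, exploiting that $\Delta_i\downarrow\Delta$ means $\Delta\subseteq\Delta_i$ for every $i$ and $\bigcap_i\Delta_i=\Delta$. First I would use finite additivity of the POVM on the disjoint decomposition $\Delta_i=\Delta\cup(\Delta_i\setminus\Delta)$ (disjoint) to write $F(\Delta_i)=F(\Delta)+F(\Delta_i\setminus\Delta)$, so that the operator-norm difference collapses to a single term,
\begin{equation*}
\|F(\Delta_i)-F(\Delta)\|=\|F(\Delta_i\setminus\Delta)\|.
\end{equation*}
This is the same bookkeeping that turned $\|F(\Delta)-F(\Delta_i)\|$ into $\|F(\Delta-\Delta_i)\|$ in Theorem \ref{abs}; the only change is that the excess set now sits above $\Delta$ rather than below it.

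Next I would invoke absolute continuity of $F$ with respect to $\nu$ to pass from operators to the scalar measure,
\begin{equation*}
\|F(\Delta_i\setminus\Delta)\|\le c\,\nu(\Delta_i\setminus\Delta).
\end{equation*}
The sets $E_i:=\Delta_i\setminus\Delta$ form a decreasing sequence, and since $\bigcap_i\Delta_i=\Delta$ we have $\bigcap_i E_i=\emptyset$. It therefore remains to show $\nu(E_i)\to 0$, i.e.\ to apply continuity from above of the measure $\nu$, after which feeding the limit back through the two displayed inequalities yields $\|F(\Delta_i)-F(\Delta)\|\to 0$, the asserted uniform continuity at $\Delta$.

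The delicate point -- and the reason this is only a \emph{local} (``at $\Delta$'') statement rather than the global Theorem \ref{abs} -- is precisely the applicability of continuity from above. For a decreasing sequence $E_i\downarrow\emptyset$ one has $\nu(E_i)\to 0$ only once some $\nu(E_{i_0})<\infty$; for an infinite measure this can genuinely fail, since a far-off piece of $\Delta_i$ carrying constant infinite $\nu$-measure would survive in the limit (and such pieces are invisible to the bare absolute-continuity bound, which then reads $\|F(E_i)\|\le\infty$). In Theorem \ref{abs} this difficulty was masked by global finiteness of $\nu$; here the hypothesis $\nu(\Delta)<\infty$ must be paired with finiteness of the approximating sets themselves. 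As soon as $\nu(\Delta_{i_0})<\infty$ for a single index $i_0$ (equivalently $\nu(E_{i_0})<\infty$), all later $E_i\subseteq E_{i_0}$ have finite measure and the monotone sequence $\nu(E_i)$ decreases to $\nu(\bigcap_i E_i)=\nu(\emptyset)=0$. In the localization applications the $\Delta_i$ are relatively compact neighborhoods shrinking onto the finite-measure set $\Delta$, so this finiteness is automatic. I expect verifying that the tail sets $E_i$ eventually have finite $\nu$-measure to be the only genuine obstacle; the operator-theoretic reduction and the final monotone limit are then routine.
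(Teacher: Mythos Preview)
Your proposal follows the paper's proof exactly: reduce $\|F(\Delta_i)-F(\Delta)\|$ to $\|F(\Delta_i\setminus\Delta)\|$, bound this by $c\,\nu(\Delta_i\setminus\Delta)$ via absolute continuity, and invoke continuity from above of $\nu$ to conclude. Your caution about needing $\nu(\Delta_{i_0})<\infty$ for some $i_0$ is well-founded---the paper simply asserts $\lim_i\nu(\Delta_i)=\nu(\Delta)$ ``by the continuity of $\nu$'' without addressing this point explicitly.
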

\begin{proof}
Suppose $\Delta_i\downarrow\Delta$. Then, by the continuity of $\nu$, $\lim_{i\to\infty}\nu(\Delta_i)=\nu(\Delta)$. Hence, 
\begin{eqnarray*}
\lim_{n\to\infty}\| F(\Delta_i)-F(\Delta)\|&=\lim_{n\to\infty}\| F(\Delta_i-\Delta)\|\\
&\leq c\lim_{n\to\infty}\nu(\Delta_i-\Delta)=0.
\end{eqnarray*} 
\end{proof}
\noindent
As a consequence of theorem \ref{abs}, we have the following necessary condition for the norm-1 property in the case of absolutely continuous POVMs. The proof is analogous to the proof of theorem \ref{uni} and will be omitted.
\begin{theorem}\label{weak}
Let $F:\mathcal{B}(X)\to\mathcal{F(H)}$ be absolutely continuous with respect to a regular measure $\nu$. Then, $F$ has the norm-1 property only if $\|F(\{x\})\|\neq 0$ for each $x\in X$ such that $\nu(\{x\})<\infty$. 
\end{theorem}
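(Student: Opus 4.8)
The plan is to argue by contradiction, reproducing the architecture of the proof of theorem \ref{uni} but with the local uniform continuity furnished by theorem \ref{abc} in place of the global uniform continuity used there. Fix $x\in\sigma(F)$ with $\nu(\{x\})<\infty$ (for $x\notin\sigma(F)$ some open neighbourhood $\Delta\ni x$ has $F(\Delta)=\mathbf{0}$, forcing $F(\{x\})=\mathbf{0}$, so only spectral points are genuinely at issue), assume $F$ has the norm-1 property, and suppose for contradiction that $\|F(\{x\})\|=0$. The goal is again to produce the impossible chain $1=\lim_i\|F(\Delta_i)\|\le 0$ along a suitable decreasing sequence $\Delta_i$.

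First I would manufacture the decreasing sequence of open sets at $x$ from the regularity of $\nu$. By outer regularity, choose open sets $U_i\ni x$ with $\nu(U_i)<\nu(\{x\})+1/i$ and put $\Delta_i=\bigcap_{j\le i}U_j$; these are open, contain $x$, decrease, and their measures decrease to $\nu(\{x\})$. Since each $\Delta_i$ is open and $x\in\sigma(F)$, the spectral characterization recalled in the footnote to theorem \ref{uni} gives $F(\Delta_i)\neq\mathbf{0}$, so the norm-1 property yields $\|F(\Delta_i)\|=1$ for every $i$. Setting $\Delta_\infty=\bigcap_i\Delta_i\supseteq\{x\}$ we have $\nu(\Delta_\infty)=\nu(\{x\})<\infty$, hence $\Delta_i\downarrow\Delta_\infty$ with finite limiting measure, and theorem \ref{abc} gives $\|F(\Delta_i)-F(\Delta_\infty)\|\to 0$. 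Because $\nu(\Delta_\infty\setminus\{x\})=0$, absolute continuity forces $\|F(\Delta_\infty\setminus\{x\})\|\le c\,\nu(\Delta_\infty\setminus\{x\})=0$, so $F(\Delta_\infty)=F(\{x\})$ and therefore $\|F(\Delta_\infty)\|=\|F(\{x\})\|=0$. Combining this with the convergence just obtained gives $\|F(\Delta_i)\|\to 0$, contradicting $\|F(\Delta_i)\|=1$.

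I expect the only real obstacle to be this first step. In theorem \ref{uni} the $n$-manifold hypothesis hands one a neighbourhood base of open sets shrinking exactly to $\{x\}$, so that $\Delta_\infty=\{x\}$ and no extra work is needed; here that hypothesis has been dropped, so the decreasing open sequence must be built by hand from outer regularity, and one can only guarantee that $\Delta_\infty$ agrees with $\{x\}$ up to a $\nu$-null set. Bridging that discrepancy — deducing $F(\Delta_\infty)=F(\{x\})$ from $\nu(\Delta_\infty\setminus\{x\})=0$ via absolute continuity — is exactly where the absolute-continuity hypothesis takes over the role played by the topological input of theorem \ref{uni}. When $X$ is an $n$-manifold, as in all the applications of sections 3 and 4, one has $\Delta_\infty=\{x\}$ outright and the argument collapses to the verbatim analogue of theorem \ref{uni}.
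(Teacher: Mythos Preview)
Your proposal is correct and follows the architecture the paper indicates: the paper omits the proof of theorem \ref{weak}, saying only that it is analogous to the proof of theorem \ref{uni}, and that is precisely the skeleton you reproduce --- contradiction, a decreasing sequence of open neighbourhoods of $x$, then the local uniform continuity furnished by theorem \ref{abc} in place of proposition \ref{down}. Two remarks are worth making. First, the paper tacitly carries over from theorem \ref{uni} the existence of an open sequence $\Delta_i\downarrow\{x\}$ (which is how all the applications in sections 3 and 4 use it); you noticed that theorem \ref{weak} as stated has dropped the manifold hypothesis and supplied a genuine substitute via outer regularity of $\nu$, absorbing the possible gap between $\Delta_\infty$ and $\{x\}$ through absolute continuity --- this is a real improvement over what the paper sketches. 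Second, you also correctly flag that the quantifier ``for each $x\in X$'' in the statement should really be ``for each $x\in\sigma(F)$'' (as in theorem \ref{uni}); without that restriction the conclusion is formally too strong, though in the covariant examples of sections 3 and 4 the spectrum is all of $X$ and the distinction evaporates.
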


\section{Localization in phase space}
In the present section, we show that localization observables in phase space cannot satisfy the norm-1 property. In the following the phase space is denoted by the symbol $\Gamma$. 
\noindent
First, we recall some key elements of the phase space approach to quantum mechanics \cite{Ali,Prugovecki,Schroeck1}. We follow References \cite{Schroeck1,Prugovecki}.

\noindent
The main idea is that one can represent the state $\rho$ of a quantum system (i.e., a trace class positive operator of trace $1$) by means of a distribution function $f^{\rho}(\mathbf{q},\mathbf{p})$ on a suitable phase space.  At variance with the Wigner approach \cite{Wigner}, the distribution functions are positive definite. Wigner's theorem \cite{Wigner} forbids that the marginals of the distribution functions satisfy the following relations
\begin{eqnarray}\label{marginal}
&\int_{\Delta}d\mathbf{q}\int f_\rho(\mathbf{q},\mathbf{p})\,d\mathbf{p}=\Tr(\rho\, Q(\Delta)),\\
&\int_{\Delta} d\mathbf{p}\int f_\rho(\mathbf{q},\mathbf{p})\,d\mathbf{q}=\Tr(\rho\, P(\Delta)),
\end{eqnarray}

where $Q(\Delta)$ and $P(\Delta)$ are the spectral measures corresponding to the position and momentum operators respectively. Relations (\ref{marginal}) are indeed replaced by 
\begin{eqnarray*}
&\int_{\Delta}d\mathbf{q}\int f_\rho(\mathbf{q},\mathbf{p})\,d\mathbf{p}=\Tr(\rho F^Q(\Delta)),\\
&\int_{\Delta} d\mathbf{p}\int f_\rho(\mathbf{q},\mathbf{p})\,d\mathbf{q}=\Tr(\rho F^P(\Delta)).
\end{eqnarray*}
\noindent
where $F^Q(\Delta)$ and $F^P(\Delta)$ are POVMs corresponding to $Q$ and $P$ respectively. In particular, $F^Q(\Delta)$ and $F^P(\Delta)$ are the smearing of the position and momentum operators $Q$ and $P$ 
\begin{eqnarray*}
F^Q(\Delta)&=\int_{\mathbb{R}}\omega_{\Delta}(x)\,dQ_x,\\
F^P(\Delta)&=\int_{\mathbb{R}}\nu_{\Delta}(x)\,dP_x,
\end{eqnarray*}
\noindent
and are called unsharp position and momentum observables \cite{B0,B1,B2,B3,B9}. The maps $\omega$ and $\nu$ are such that $\omega_{\Delta}(\cdot)$ and $\nu_{\Delta}(\cdot)$ are measurable functions for each $\Delta$ and $\omega_{(\cdot)}(x)$ and $\omega_{(\cdot)}(x)$ are probability measures for each $x$. They are usually called Markov kernels and describe a stochastic diffusion of the standard observables $Q$ and $P$ \cite{B6,B9}. That is why $F^Q$ and $F^P$ are usually called the unsharp version of the sharp observables $Q$ and $P$ respectively. All that shows that POVMs play a key role in the phase space formulation. Moreover, it is worth remarking that a derivation of classical and quantum mechanics in a unique mathematical framework is possible \cite{B10,B11}.

One of the main steps in this approach is the construction of the phase space. In brief, we can say that there is a procedure that starting from a Lie group $G$ allows the classification of all the closed subgroups $H\subset G$ such that $G/H$ is a simplectic space (i.e, a phase space).  For example, in the case of the Galilei group, a possible choice for $H$ is the group $H=SO_3$. Then, $\Gamma=G/H=\mathbb{R}^3\times\mathbb{R}^3$, which coincides with the phase space of classical mechanics. A different choice of $H$ generates a different phase space. In other words, the procedure allows the calculation of all the phase spaces corresponding to a locally compact Lie group, $G$, with a finite dimensional Lie algebra. Once we have the phase space, we can look for a strongly continuous unitary representation of $G$ in a Hilbert space $\mathcal{H}$ and then we can define the localization observable \cite{Schroeck1}.

\begin{definition}[See \cite{Schroeck1}.]\label{Local}
Let $G$ be a locally compact topological group, $H$ a closed subgroup of $G$, $U$ a strongly continuous unitary irreducible representation of $G$ in a complex Hilbert space $\mathcal{H}$ and $\mu$ a volume measure on $G/H$.    
A localization observable is represented by a  POVM 
$$A^{\eta}(\Delta)=\int_{\Delta}\vert U(\sigma(x))\eta\rangle\langle U(\sigma(x))\eta\vert\,d\mu(x).$$ 
\noindent
where, $\sigma:G/H\mapsto G$ is a measurable map and $\eta$ is a unit vector such that 
\begin{equation}\label{admissible}
\int_{G/H}\vert U(\sigma(x))\eta\rangle\langle U(\sigma(x))\eta\vert\,d\mu(x)=\mathbf{1}.
\end{equation}
\end{definition}
\begin{theorem}\cite{Schroeck1}
The POVM $A^{\eta}$ defined in Definition \ref{Local} is covariant with respect to $U$.
\end{theorem}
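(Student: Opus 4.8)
The plan is to establish the covariance relation $U_g A^{\eta}(\Delta) U_g^{\dag}=A^{\eta}(g\Delta)$, which is exactly the convention used in the localization definitions (the form $U_g^{\dag}F(\Delta)U_g=F(g\Delta)$ appearing in the definition of covariance is the same statement after relabelling $g\mapsto g^{-1}$). The argument is a direct computation: move the conjugating unitaries through the weak integral defining $A^{\eta}$, reduce the conjugated rank-one integrand back to the original form, and finally change variables on $G/H$. Throughout, the action of $G$ on $X=G/H$ is left translation $g\cdot(g'H)=(gg')H$, and $\sigma:G/H\to G$ is the fixed measurable section, so $\pi(\sigma(x))=x$ for the canonical projection $\pi:G\to G/H$.

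First I would use the representation property $U_gU(\sigma(x))=U(g\sigma(x))$ together with the boundedness of $U_g$ (which lets us pass $U_g(\cdot)U_g^{\dag}$ inside the weak integral) to write, for any $g\in G$,
$$U_g A^{\eta}(\Delta) U_g^{\dag}=\int_{\Delta}\big|U(g\sigma(x))\eta\big\rangle\big\langle U(g\sigma(x))\eta\big|\,d\mu(x).$$
The geometric heart of the argument is the next step: since $\pi(g\sigma(x))=g\,\pi(\sigma(x))=gx=\pi(\sigma(gx))$, the group elements $g\sigma(x)$ and $\sigma(gx)$ lie in the same left coset, so there is $h(g,x)\in H$ with $g\sigma(x)=\sigma(gx)\,h(g,x)$, whence $U(g\sigma(x))\eta=U(\sigma(gx))\,U(h(g,x))\eta$.

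The key ingredient is that $\eta$ transforms under a one-dimensional representation of the stabilizer, i.e. $U(h)\eta=\chi(h)\eta$ with $|\chi(h)|=1$ for all $h\in H$; this is precisely the condition that makes the rank-one integrand depend only on the coset $x$ and not on the chosen representative $\sigma(x)$, so that Definition \ref{Local} is consistent. Granting it, the unit-modulus factor $\chi(h(g,x))$ cancels between the ket and the bra, giving $|U(g\sigma(x))\eta\rangle\langle U(g\sigma(x))\eta|=|U(\sigma(gx))\eta\rangle\langle U(\sigma(gx))\eta|$. Substituting this and performing the change of variable $y=gx$, using the $G$-invariance of the volume measure $\mu$ (so $d\mu(x)=d\mu(y)$ and $x\in\Delta\Leftrightarrow y\in g\Delta$), yields
$$U_g A^{\eta}(\Delta) U_g^{\dag}=\int_{g\Delta}\big|U(\sigma(y))\eta\big\rangle\big\langle U(\sigma(y))\eta\big|\,d\mu(y)=A^{\eta}(g\Delta),$$
the desired covariance.

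I expect the main obstacle to be the phase/cocycle step rather than the measure-theoretic bookkeeping: one must ensure that $U(h(g,x))$ acting on $\eta$ contributes only a scalar of modulus one, for otherwise the conjugated projector would equal $U(\sigma(gx))\,|U(h)\eta\rangle\langle U(h)\eta|\,U(\sigma(gx))^{\dag}$ and would fail to reproduce $A^{\eta}(g\Delta)$. This is exactly where the admissibility hypothesis (\ref{admissible}) and the coherent-state origin of $\eta$ enter, since the resolution of the identity is well defined on $G/H$ only when $\eta$ is $H$-fixed up to a character, and it is this same property that drives the covariance. A secondary point requiring care is the $G$-invariance of $\mu$ (needing, e.g., a compatibility of the modular functions of $G$ and $H$), which I would invoke as part of the meaning of \emph{volume measure}.
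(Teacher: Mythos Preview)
The paper does not supply its own proof of this theorem; it is simply quoted from \cite{Schroeck1}. Your argument is exactly the standard covariance computation for coherent-state POVMs found in Schroeck's book: pass the conjugation through the weak integral, use the cocycle decomposition $g\sigma(x)=\sigma(gx)h(g,x)$ with $h(g,x)\in H$, invoke the $H$-stability of $\eta$ up to a unimodular character so that the rank-one projector is unchanged, and then change variables using the $G$-invariance of $\mu$. You are right to flag that the character condition $U(h)\eta=\chi(h)\eta$ is not written explicitly in Definition~\ref{Local} but is part of the standing hypotheses in \cite{Schroeck1} (and is what makes $A^\eta$ independent of the choice of section $\sigma$); with that understood, your proof is complete and matches the cited reference.
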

\noindent
A general property of the localization observables in Definition \ref{Local} is that they are absolutely continuous with respect to the measure $\mu$.
\begin{theorem}\cite{Schroeck1}\label{0}
The POVM in Definition \ref{Local} is absolutely continuous with respect to $\mu$.
\end{theorem}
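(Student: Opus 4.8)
The plan is to exploit the positivity of $A^{\eta}(\Delta)$ together with the unitarity of $U$ and the normalization $\|\eta\|=1$. Since $A^{\eta}(\Delta)$ is a positive operator, its operator norm is recovered from the quadratic form via $\|A^{\eta}(\Delta)\|=\sup_{\|\psi\|=1}\langle\psi,A^{\eta}(\Delta)\psi\rangle$. The first step is therefore to write out this quadratic form explicitly. Using the fact that the integral defining $A^{\eta}$ converges in the weak operator topology (Definition \ref{Local}), I can move the inner product inside the integral to obtain
$$\langle\psi,A^{\eta}(\Delta)\psi\rangle=\int_{\Delta}\vert\langle\psi,U(\sigma(x))\eta\rangle\vert^2\,d\mu(x).$$

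Next I would bound the integrand pointwise and uniformly. Because $U$ is a unitary representation and $\|\eta\|=1$, each vector $U(\sigma(x))\eta$ is again a unit vector, so for every unit vector $\psi$ and every $x$ the Cauchy--Schwarz inequality gives $\vert\langle\psi,U(\sigma(x))\eta\rangle\vert^2\leq\|\psi\|^2\,\|U(\sigma(x))\eta\|^2=1$. Integrating this uniform bound over $\Delta$ yields $\langle\psi,A^{\eta}(\Delta)\psi\rangle\leq\mu(\Delta)$, and taking the supremum over all unit vectors $\psi$ produces
$$\|A^{\eta}(\Delta)\|\leq\mu(\Delta),$$
which is precisely absolute continuity with respect to $\mu$ with constant $c=1$.

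I do not anticipate any serious obstacle: the argument is a uniform pointwise estimate on a rank-one integrand followed by a trivial integration. The only points requiring care are the interchange of the inner product with the integral, which is immediate from the weak-operator-topology convergence built into the definition of $A^{\eta}$, and the observation that the resulting constant can be chosen to be $1$ independently of $\Delta$ and $\eta$. It is worth noting that this theorem, combined with Theorem \ref{abc}, will be the mechanism by which the localization observables of Definition \ref{Local} are shown to be uniformly continuous on sets of finite $\mu$-measure, setting up the application of the necessary condition of Theorem \ref{uni} (equivalently Theorem \ref{weak}) in the subsequent analysis.
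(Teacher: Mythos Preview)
Your proof is correct and follows essentially the same route as the paper: compute the quadratic form $\langle\psi,A^{\eta}(\Delta)\psi\rangle=\int_{\Delta}\vert\langle\psi,U(\sigma(x))\eta\rangle\vert^2\,d\mu(x)$, bound the integrand by $1$ via Cauchy--Schwarz and unitarity, and integrate to get $\|A^{\eta}(\Delta)\|\leq\mu(\Delta)$. The paper's own proof is even terser, but the content is identical, with constant $c=1$.
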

\begin{proof}
For each $\psi\in\mathcal{H}$,
\begin{eqnarray*}
\langle \psi, A^{\eta}(\Delta)\psi\rangle&=\int_{\Delta}\langle\psi, U(\sigma(x))\eta\rangle\langle U(\sigma(x))\eta,\psi\rangle\,d\mu(x)\\
&=\int_{\Delta}\vert\langle\psi, U(\sigma(x)\eta\rangle\vert^2\,d\mu(x)\leq\int_{\Delta}d\mu(x).
\end{eqnarray*}
\end{proof}
\noindent
The localization of the photon in phase space was introduced in Ref. \cite{Brooke} with the same procedure we just described. Therefore, at variance with the usual definition of localization (where the covariance under the Euclidean group is required), localization in phase space requires that $F$ is covariant with respect to the group which describes the symmetry of the system (the Galilei group in the non-relativistic case and the Poincare group in the relativistic case). 

Before we prove that the norm-1 property is not possible for localization observables in phase space, we want to give a physical motivation which is based on the Heisenberg inequality. Let $F$ be a phase space localization observable covariant with respect to the Galilei group and suppose that the norm-1 property holds. In this case the phase space $\Gamma$ corresponding to the system can be chosen to be $\Gamma=\mathbb{R}_{\mathbf{q}}\times\mathbb{R}_{\mathbf{p}}=\mathbb{R}^3\times\mathbb{R}^3$ (See \cite{Schroeck1}, page 425.) Then, for each Borel set $\Delta_{\mathbf{q}}\times\Delta_{\mathbf{p}}\in\Gamma$, there exists a family of states $\psi_n$ such that 
\begin{equation}\label{1}
\lim_{n\to\infty}\langle\psi_n,F(\Delta_q\times\Delta_p)\psi_n\rangle=1
\end{equation}
\noindent
where, $\langle\psi_n,F(\Delta_q\times\Delta_p)\psi_n\rangle$ is interpreted as the probability that an outcome of a joint measurement of the unsharp position and momentum observables is in $\Delta_q\times\Delta_p$ when the state is $\psi_n$. Therefore, the  violation of Heisenberg inequality comes from the fact that (\ref{1}) holds for any Borel set $\Delta_q\times\Delta_p$.

In the following, we apply Theorem \ref{weak} to the case of the Galilei group $G=\{(t,\mathbf{q},\mathbf{p},R)\,\vert\, t\in\mathbb{R},\mathbf{q},\mathbf{p}\in\mathbb{R}^3, R\in SO(3)\}$ with $H=\{(t,\mathbf{0},\mathbf{0},R)\,\vert\, t\in\mathbb{R}, R\in SO(3)\}$. Therefore, $G/H=\mathbb{R}_{\mathbf{q}}\times\mathbb{R}_{\mathbf{p}}=\mathbb{R}^3\times\mathbb{R}^3$. In that case, the invariant measure is the Lebesgue measure. In the following we set $\mathbf{x}=(\mathbf{q},\mathbf{p})\in\mathbb{R}^3\times\mathbb{R}^3$.


\noindent
\noindent
Theorem \ref{weak} implies that the POVM $A^{\eta}$ in Definition \ref{Local} does not have the norm-1 property.
\begin{theorem}\label{nogo}
 The localization observable represented by the POVM $A^{\eta}$ with $G/H=\mathbb{R}^3\times\mathbb{R}^3$ and $\mu$ the Lebesgue measure does not have the norm-1 property.
\end{theorem}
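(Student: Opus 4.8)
The plan is to obtain the result as an immediate corollary of Theorem \ref{weak}, since all of its hypotheses are already guaranteed by the earlier results in this section. The strategy is therefore to verify that the necessary condition of Theorem \ref{weak} fails for $A^\eta$ by locating points whose singletons have finite $\mu$-measure but at which $A^\eta$ vanishes in norm.

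First I would recall that, by Theorem \ref{0}, the POVM $A^\eta$ is absolutely continuous with respect to $\mu$; in fact the computation there yields $\langle\psi,A^\eta(\Delta)\psi\rangle\le\mu(\Delta)$ for every unit vector $\psi$, so that $\|A^\eta(\Delta)\|\le\mu(\Delta)$ and absolute continuity holds with constant $c=1$. In the present setting $\mu$ is the Lebesgue measure on $\mathbb{R}^3\times\mathbb{R}^3$, which is a regular Borel measure, so the hypotheses of Theorem \ref{weak} are in place.

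Next I would exploit the elementary fact that the Lebesgue measure annihilates singletons: for every point $\mathbf{x}=(\mathbf{q},\mathbf{p})\in\mathbb{R}^3\times\mathbb{R}^3$ we have $\mu(\{\mathbf{x}\})=0<\infty$. Inserting this into the absolute-continuity bound gives $\|A^\eta(\{\mathbf{x}\})\|\le c\,\mu(\{\mathbf{x}\})=0$, whence $\|A^\eta(\{\mathbf{x}\})\|=0$ for every $\mathbf{x}$. I would then invoke the contrapositive of Theorem \ref{weak}: that theorem states that the norm-1 property forces $\|F(\{x\})\|\ne0$ at each point whose singleton has finite measure, yet here every such singleton has finite (indeed zero) measure while $\|A^\eta(\{\mathbf{x}\})\|=0$. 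The necessary condition is thus violated, and $A^\eta$ cannot possess the norm-1 property.

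Since the proof reduces to checking hypotheses that are already secured, I do not anticipate a substantive obstacle within this theorem itself; the conceptual work was carried out upstream in establishing the necessary condition (Theorem \ref{weak}) and the absolute continuity of $A^\eta$ (Theorem \ref{0}). The only points deserving a moment's attention are that the Lebesgue measure qualifies as the regular measure required by Theorem \ref{weak}, so that the theorem applies, and that the estimate $\|A^\eta(\Delta)\|\le\mu(\Delta)$ supplies the absolute-continuity constant uniformly in $\Delta$; both are immediate.
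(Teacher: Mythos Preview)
Your proposal is correct and follows exactly the same route as the paper: use Theorem \ref{0} to obtain $\|A^\eta(\{\mathbf{x}\})\|\le\mu(\{\mathbf{x}\})=0$ for every $\mathbf{x}\in\mathbb{R}^3\times\mathbb{R}^3$, and then appeal to Theorem \ref{weak} to conclude that the norm-1 property fails. Your additional remarks about regularity of Lebesgue measure and the explicit constant $c=1$ are sound and merely make the verification of hypotheses more explicit than the paper does.
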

\begin{proof} Let $\mathbf{x}\in G/H$. By Theorem \ref{0}, 
$$\|A^{\eta}(\{\mathbf{x}\})\|\leq\mu(\{\mathbf{x}\}).$$
\noindent
Since $\mu$ is the Lebesgue measure on $G/H$,
$$\|A^{\eta}(\{\mathbf{x}\})\|\leq\mu(\{\mathbf{x}\})=0.$$
\noindent
Theorem \ref{weak} completes the proof.
\end{proof}
\noindent
An analogous result can be proved in the case of massless relativistic particles. In the general relativistic case $G$ is the double cover $T^4\oslash SL(2,\mathbb{C})$ of the Poincare group. In particular, $T^4$ is the Minkowski space and $SL(2,\mathbb{C})$ is the double cover of the Lorentz group $L$, $A:SL(2,\mathbb{C})\to L$. The symbol $\oslash$ denotes the semidirect product.
In the massless relativistic case the relevant subgroup is $H=\mathbb{R}\oslash (\mathbb{R}^{2}\oslash \widetilde{O}(2))$ where $\widetilde{O}(2)$ is the double cover of the group of rotations in $\mathbb{R}^{2}$. 
The invariant measure on $G/H$ is (see equation (344), page 463, in Ref.\cite{Schroeck1})
\begin{equation}\label{Poincare}
d\mu=d(\alpha)d(\gamma)d(\delta)\times(p^0)^{-1}dp^1\wedge dp^2\wedge dp^3
\end{equation}
\noindent
where $\alpha =a_{\mu }(A[p_{0}])^{\mu },$ $\gamma =a_{\mu }(A[u_{0}])^{\mu
},$ $\delta =a_{\mu }(A[v_{0}])^{\mu },$ where $p_{0}=(1,0,0,1),$ $%
u_{0}=(0,1,0,0),$ $v_{0}=(0,0,1,0),$ and $(\boldsymbol{a},A)$ is an element
of the double cover of the Poincar\'{e} group. Thus $\alpha $, $\gamma $, $%
\delta $ are in $\mathbb{R}$. Hence, we have a representation of the zero
mass particles. 
Moreover, $\mu$ is zero in each single point subset of the phase space so that the reasoning in the proof of theorem \ref{nogo} can be used.  
\begin{theorem}\label{2}
If $G/H=T^4\oslash SL(2,\mathbb{C})/\mathbb{R}\oslash (\mathbb{R}^{2}\oslash \widetilde{O}(2))$ with the measure $\mu$  in equation (\ref{Poincare}), the POVM $A^{\eta}$ in Definition \ref{Local} does not have the norm-1 property.
\end{theorem}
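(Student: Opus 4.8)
The plan is to mirror the proof of Theorem \ref{nogo}; the only genuinely new ingredient is to check that the invariant measure (\ref{Poincare}), notwithstanding its non-constant density $(p^0)^{-1}$, still assigns measure zero to every singleton, after which Theorems \ref{0} and \ref{weak} do all the work. First I would invoke Theorem \ref{0}, which gives the absolute continuity of $A^{\eta}$ with respect to $\mu$ in the quantitative form $\|A^{\eta}(\Delta)\|\leq\mu(\Delta)$ for all $\Delta\in\mathcal{B}(G/H)$; specializing to $\Delta=\{\mathbf{x}\}$ yields $\|A^{\eta}(\{\mathbf{x}\})\|\leq\mu(\{\mathbf{x}\})$ for every point $\mathbf{x}\in G/H$.

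The central step is then to verify that $\mu(\{\mathbf{x}\})=0$ for each $\mathbf{x}$. The measure in (\ref{Poincare}) factorizes as the product of the three Lebesgue factors $d(\alpha)$, $d(\gamma)$, $d(\delta)$ on the $\mathbb{R}$-valued coordinates with the momentum-space measure $(p^0)^{-1}dp^1\wedge dp^2\wedge dp^3$. Each Lebesgue factor is non-atomic, and the momentum factor, being absolutely continuous with respect to three-dimensional Lebesgue measure with the locally integrable density $(p^0)^{-1}$, is non-atomic as well. A singleton in $G/H$ is the product of singletons in the individual coordinate factors, so its $\mu$-measure is a product in which every factor vanishes; hence $\mu(\{\mathbf{x}\})=0$. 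Combining this with the previous step gives $\|A^{\eta}(\{\mathbf{x}\})\|=0$ for every $\mathbf{x}\in G/H$, while simultaneously $\mu(\{\mathbf{x}\})=0<\infty$ at each such point.

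Finally, Theorem \ref{weak} closes the argument: it states that $\|A^{\eta}(\{\mathbf{x}\})\|\neq 0$ at every point of finite $\mu$-measure is a necessary condition for the norm-1 property, and we have just exhibited that this condition fails everywhere. I expect the only step requiring real care to be the non-atomicity in the presence of the density $(p^0)^{-1}$, which for the massless mass-shell $p^0=|\mathbf{p}|$ blows up at $\mathbf{p}=\mathbf{0}$. The point to emphasize is that this singularity creates no atom: integrating a function, however large, over a Lebesgue-null set still yields zero, so the singular behaviour of $(p^0)^{-1}$ near the apex of the cone is irrelevant to the measure of a single point. Everything else is a routine transcription of the reasoning in Theorem \ref{nogo}.
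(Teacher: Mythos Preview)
Your proposal is correct and follows exactly the paper's own argument: apply Theorem~\ref{0} to bound $\|A^{\eta}(\{\mathbf{x}\})\|$ by $\mu(\{\mathbf{x}\})$, observe that the measure~(\ref{Poincare}) is non-atomic so this bound is zero, and conclude via Theorem~\ref{weak}. The only difference is that you spell out the non-atomicity of $\mu$ in detail (including the harmless singularity of $(p^0)^{-1}$), whereas the paper simply asserts $\mu(\{\mathbf{x}\})=0$ in the text preceding the theorem.
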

\begin{proof} Let $\mathbf{x}\in G/H$. By Theorem \ref{0}, 
$$\|A^{\eta}(\{\mathbf{x}\})\|\leq\mu(\{\mathbf{x}\}).$$
\noindent
Since $\mu(\{\mathbf{x}\})=0$,
$$\|A^{\eta}(\{\mathbf{x}\})\|\leq\mu(\{\mathbf{x}\})=0.$$
\noindent
Theorem \ref{weak} completes the proof.
\end{proof}

\noindent


\section{Localization in Configuration Space}
\noindent
Now, we study the marginals of $A^{\eta}$ in the non-relativistic case and prove that they cannot have the norm-1 property. We limit ourselves to the marginal $F^Q_\eta(\Delta_{\mathbf{q}}):=A^{\eta}(\Delta_{\mathbf{q}}\times\mathbb{R}_{\mathbf{p}})$ which represents the unsharp position observable. Clearly what we prove applies also to the marginal $F^P_\eta(\Delta_{\mathbf{p}}):= A^{\eta}(\mathbb{R}_{\mathbf{q}}\times\Delta_{\mathbf{p}})$ which represents the unsharp momentum observable. 
\begin{theorem}\label{Q}
The POVM $F^Q_{\eta}(\Delta_{\mathbf{q}})$ is absolutely continuous with respect to the Lebesgue measure on $\mathbb{R}_{\mathbf{q}}$.
\end{theorem}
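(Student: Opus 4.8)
The plan is to make the marginal $F^Q_\eta$ completely explicit as a multiplication operator and then read off the absolute-continuity bound from Young's convolution inequality. It should be stressed at the outset why Theorem~\ref{0} cannot simply be quoted: it gives $\|A^\eta(\Delta)\|\le\mu(\Delta)$, but for $\Delta=\Delta_{\mathbf{q}}\times\mathbb{R}_{\mathbf{p}}$ one has $\mu(\Delta)=+\infty$, so the estimate is vacuous. The real content is therefore to perform the integration over the momentum variable $\mathbf{p}$ by hand and to show that it collapses to a finite density proportional to the Lebesgue measure of $\Delta_{\mathbf{q}}$.

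First I would fix a unit vector $\psi$ and, reproducing the computation in the proof of Theorem~\ref{0}, write
$$\langle\psi,F^Q_\eta(\Delta_{\mathbf{q}})\psi\rangle=\int_{\Delta_{\mathbf{q}}\times\mathbb{R}_{\mathbf{p}}}|\langle\psi,U(\sigma(\mathbf{q},\mathbf{p}))\eta\rangle|^2\,d\mu(\mathbf{q},\mathbf{p}).$$
Since here $G$ is the Galilei group and $G/H=\mathbb{R}^3\times\mathbb{R}^3$, the measure is $d\mu=\kappa\,d\mathbf{q}\,d\mathbf{p}$ (a constant multiple of Lebesgue measure fixed by the admissibility condition (\ref{admissible})), and the $U(\sigma(\mathbf{q},\mathbf{p}))$ are the Weyl phase-space displacement operators, so that in the Schr\"odinger representation $(U(\sigma(\mathbf{q},\mathbf{p}))\eta)(\mathbf{y})=e^{i\mathbf{p}\cdot\mathbf{y}/\hbar}\eta(\mathbf{y}-\mathbf{q})$ up to a phase. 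The nonnegativity of the integrand lets me separate the variables by Tonelli's theorem and integrate over $\mathbf{p}$ first.

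The key step is that, for fixed $\mathbf{q}$, the map $\mathbf{p}\mapsto\langle\psi,U(\sigma(\mathbf{q},\mathbf{p}))\eta\rangle$ is the Fourier transform of $\mathbf{y}\mapsto\overline{\psi(\mathbf{y})}\,\eta(\mathbf{y}-\mathbf{q})$, so Plancherel's theorem gives
$$\kappa\int_{\mathbb{R}_{\mathbf{p}}}|\langle\psi,U(\sigma(\mathbf{q},\mathbf{p}))\eta\rangle|^2\,d\mathbf{p}=\int_{\mathbb{R}^3}|\psi(\mathbf{y})|^2\,|\eta(\mathbf{y}-\mathbf{q})|^2\,d\mathbf{y},$$
the Plancherel factor $(2\pi\hbar)^3$ cancelling against $\kappa$. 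A second application of Tonelli then identifies $F^Q_\eta(\Delta_{\mathbf{q}})$ with the operator of multiplication, in the position representation, by $\mathbf{y}\mapsto\int_{\Delta_{\mathbf{q}}}|\eta(\mathbf{y}-\mathbf{q})|^2\,d\mathbf{q}=(\chi_{\Delta_{\mathbf{q}}}*|\eta|^2)(\mathbf{y})$. As the norm of a multiplication operator is the essential supremum of its multiplier, Young's inequality yields
$$\|F^Q_\eta(\Delta_{\mathbf{q}})\|=\|\chi_{\Delta_{\mathbf{q}}}*|\eta|^2\|_\infty\le\|\chi_{\Delta_{\mathbf{q}}}\|_1\,\||\eta|^2\|_\infty=\|\eta\|_\infty^2\,\lambda(\Delta_{\mathbf{q}}),$$
which is absolute continuity with constant $c=\|\eta\|_\infty^2$.

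The step I expect to be the real obstacle is the boundedness of $\eta$ implicit in that final constant: the estimate $\|F^Q_\eta(\Delta_{\mathbf{q}})\|\le c\,\lambda(\Delta_{\mathbf{q}})$ holds for all $\Delta_{\mathbf{q}}$ precisely when $|\eta|^2\in L^\infty$, and testing against shrinking balls shows, via the Lebesgue differentiation theorem, that this is in fact necessary, with $c\ge\||\eta|^2\|_\infty$. I would therefore make it explicit that the fiducial vector $\eta$ is chosen bounded (as it is for the physically relevant windows, e.g. the Gaussian ground state), since the admissibility condition (\ref{admissible}) by itself does not force $\eta\in L^\infty$. The remaining points---the two interchanges of integration and the cancellation of the $(2\pi\hbar)^3$ factors---are routine.
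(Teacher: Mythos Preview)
Your argument is correct and genuinely different from the paper's. The paper does not pass to the Schr\"odinger realisation at all: it simply applies Fubini to write
\[
F^Q_\eta(\Delta_{\mathbf q})=\int_{\Delta_{\mathbf q}}\widehat{Q}_\eta(\mathbf q)\,d\mathbf q,
\qquad
\widehat{Q}_\eta(\mathbf q):=\int_{\mathbb{R}_{\mathbf p}}U(\sigma(\mathbf q,\mathbf p))\,|\eta\rangle\langle\eta|\,U^{\dagger}(\sigma(\mathbf q,\mathbf p))\,d\mathbf p,
\]
and then asserts $\widehat{Q}_\eta(\mathbf q)\le\mathbf 1$ ``by equation (\ref{admissible})'', which immediately gives $\|F^Q_\eta(\Delta_{\mathbf q})\|\le\lambda(\Delta_{\mathbf q})$ with constant $c=1$. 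Your route---Plancherel in the momentum variable, identification of $F^Q_\eta(\Delta_{\mathbf q})$ with multiplication by $\chi_{\Delta_{\mathbf q}}*|\eta|^2$, then Young's inequality---is more laborious and representation-dependent, but it buys something real: it produces the explicit constant $c=\|\eta\|_\infty^2$ and, more importantly, it exposes the hypothesis $\eta\in L^\infty$ that is actually needed. Indeed, your own Plancherel computation shows that $\widehat{Q}_\eta(\mathbf q)$ is multiplication by $|\eta(\cdot-\mathbf q)|^2$, so the paper's inequality $\widehat{Q}_\eta(\mathbf q)\le\mathbf 1$ is equivalent to $\|\eta\|_\infty\le 1$ and does \emph{not} follow from the admissibility condition alone (which here only encodes $\|\eta\|_2=1$). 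Your closing remark that $\eta\in L^\infty$ is in fact necessary, via Lebesgue differentiation on shrinking balls, is a nice sharpening that the paper's argument cannot see.
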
 
\begin{proof}
\begin{eqnarray*}\label{position}
F^Q_\eta(\Delta_{\mathbf{q}})=A^{\eta}(\Delta\times\mathbb{R}_{\mathbf{p}})&=\int_{\Delta\times\mathbb{R}_{\mathbf{p}}}U(\sigma(\mathbf{q},\mathbf{p}))\,\vert\eta\rangle\langle\eta\vert\, U^\dag(\sigma(\mathbf{q},\mathbf{p}))\,d\mathbf{q}\,d\mathbf{p}\\
&=\int_{\Delta}\,d\mathbf{q}\int_{\mathbb{R}_{\mathbf{p}}} U(\sigma(\mathbf{q},\mathbf{p}))\,\vert\eta\rangle\langle\eta\vert\, U^\dag(\sigma(\mathbf{q},\mathbf{p}))\,d \mathbf{p}\\
&=\int_{\Delta}\widehat{Q}_\eta(\mathbf{q})\,d\mathbf{q}\leq\int_{\Delta}\mathbf{1}\,d\mathbf{q},
\end{eqnarray*}
\noindent
where 
$$\widehat{Q}_\eta(\mathbf{q})=\int_{\mathbb{R}_{\mathbf{p}}}U(\sigma(\mathbf{q},\mathbf{p}))\,\vert\eta\rangle\langle\eta\vert\, U^\dag(\sigma(\mathbf{q},\mathbf{p}))\,d\mathbf{p}$$
\noindent
and equation (\ref{admissible}) in definition \ref{Local} has been used.\end{proof}

\noindent
Theorems \ref{weak} implies the following corollary.  
\begin{corollary} $F^Q_{\eta}$ cannot have the norm-1 property
\end{corollary}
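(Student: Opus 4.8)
The plan is to invoke the chain of results already assembled in the paper and apply them to the specific marginal POVM $F^Q_\eta$. The key observation is that the corollary follows almost immediately from Theorem \ref{Q} together with the necessary condition for the norm-1 property stated in Theorem \ref{weak}. First I would recall that Theorem \ref{Q} establishes that $F^Q_\eta$ is absolutely continuous with respect to the Lebesgue measure on $\mathbb{R}_{\mathbf{q}}=\mathbb{R}^3$; that is, there is a constant $c$ (in fact $c=1$) such that $\|F^Q_\eta(\Delta_{\mathbf{q}})\|\leq c\,\lambda(\Delta_{\mathbf{q}})$ for every Borel set $\Delta_{\mathbf{q}}$, where $\lambda$ denotes Lebesgue measure.

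Next I would specialize this inequality to singletons. Applying the absolute-continuity bound to $\Delta_{\mathbf{q}}=\{\mathbf{q}\}$ gives
\begin{equation*}
\|F^Q_\eta(\{\mathbf{q}\})\|\leq c\,\lambda(\{\mathbf{q}\})=0,
\end{equation*}
since every single-point subset of $\mathbb{R}^3$ has Lebesgue measure zero. Hence $\|F^Q_\eta(\{\mathbf{q}\})\|=0$ for every $\mathbf{q}\in\mathbb{R}^3$, and in particular $\lambda(\{\mathbf{q}\})=0<\infty$ so the hypothesis of Theorem \ref{weak} is met at every point. The final step is simply to contrapose Theorem \ref{weak}: that theorem asserts that an absolutely continuous POVM can have the norm-1 property only if $\|F(\{x\})\|\neq 0$ at each point of finite measure. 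Since this necessary condition fails at every point (indeed $\|F^Q_\eta(\{\mathbf{q}\})\|=0$ everywhere), $F^Q_\eta$ cannot have the norm-1 property.

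I do not anticipate a genuine obstacle here, since the corollary is a direct transcription of the argument already used in the proofs of Theorem \ref{nogo} and Theorem \ref{2}. The only point requiring minor care is to make explicit that the relevant base space is $\mathbb{R}^3$ (configuration space) rather than the full phase space $\mathbb{R}^3\times\mathbb{R}^3$, so that the measure in play is the Lebesgue measure on $\mathbb{R}_{\mathbf{q}}$ supplied by Theorem \ref{Q}, and that this is the measure $\nu$ appearing in the statement of Theorem \ref{weak}. One should also verify that $F^Q_\eta$ is genuinely nonzero on sets of positive measure (so the conclusion is not vacuous), which is immediate from the normalization $\int_{G/H}|U(\sigma(x))\eta\rangle\langle U(\sigma(x))\eta|\,d\mu(x)=\mathbf{1}$ in Definition \ref{Local}. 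With these remarks in place the proof is a two-line application of the preceding theorems, exactly parallel to the phase-space case, and the identical argument applies verbatim to the momentum marginal $F^P_\eta$.
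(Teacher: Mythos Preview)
Your proposal is correct and follows exactly the route indicated in the paper: absolute continuity of $F^Q_\eta$ with respect to Lebesgue measure (Theorem \ref{Q}) forces $\|F^Q_\eta(\{\mathbf{q}\})\|=0$ at every point, so the necessary condition of Theorem \ref{weak} fails and the norm-1 property is excluded. The paper itself gives no further argument beyond invoking Theorem \ref{weak}, so your write-up is simply a more detailed version of the same proof.
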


In Ref. \cite{Busch1} it is shown that in order for a localization observable to satisfy Einstein causality, the localization observable must be commutative. It is worth remarking that, although $A^\eta$ is not commutative, $F^Q_\eta$ is commutative and can be characterized as the smearing of the position operator  \cite{B0}-\cite{B9}. Unfortunately, as we have just proved, $F^Q_{\eta}$ does not satisfy the norm-1 property. It would be interesting to analyze in general the relationships between causality and norm-1 property. That will be the topic of a future work.  

\section*{References}

\end{document}